\begin{document}

\doublespacing

\title{Gouy-Stodola Theorem as a variational principle for open systems.}

\author{Umberto Lucia\\ Dipartimento Energia, Politecnico di Torino, \\ Corso Duca degli Abruzzi 24, 10129 Torino, Italy\\ umberto.lucia@polito.it}
\date{}
\maketitle

\begin{abstract}
The recent researches in non equilibrium and far from equilibrium systems have been proved to be useful for their applications in different disciplines and many subjects. A general principle to approach all these phenomena with a unique method of analysis is required in science and engineering: a variational principle would have this fundamental role. Here, the Gouy-Stodola theorem is proposed to be this general variational principle, both proving that it satisfies the above requirements and relating it to a statistical results on entropy production.
\end{abstract}
\maketitle

\section{Introduction}
\label{intro}
The variational methods are very important in mathematics, physics, science and engineering because they allow to describe the natural systems by physical quantities independently from the frame of reference used \cite{ozisik1980}. Moreover, Lagrangian formulation can be used in a variety of physical phenomena and a structural analogy between different physical phenomena has been pointed out \cite{truesdell}. The most important result of the variational principle consists in obtaining both local and global theories \cite{arnold}: global theory allows us to obtain information directly about the mean values of the physical quantities, while the local one about their distribution \cite{lucia1995}. 

Last, the notions of entropy and its production and generation are the fundamentals of modern thermodynamics and a lot of variational approaches has been proposed in thermodynamics \cite{martyushev}. Today, the researches in non equilibrium and far from equilibrium systems have been proved to be useful for their application in mathematical and theoretical biology, biotechnologies, nanotechnologies, ecology, climate changes, energy optimization, thermo-economy, phase separation, aging process, system theory and control, pattern formation, cancer pharmacology, DNA medicine, metabolic engineering, chaotic and dynamical systems, all summarized in non linear, dissipative and open systems. A general principle to approach all these phenomena with a unique method of analysis is required in science and engineering: a variational principle would have this fundamental role. But, a useful variational principle out of the use of work has been proved not to be obtained \cite{lebowitz} and, for the open systems, the entropy production for the system and the reservoir has been proved to be obtained if and only if only one heat bath \cite{lebowitz} exists and its temperature is constant \cite{lebowitz}. The different variational principle have never been related with these fundamental requirement for a general principle in thermodynamics of open system, and often they are related only to closed or isolated systems \cite{martyushev}.
 
Last, in engineering thermodynamics the open systems are analysed using entropy generation, but its statistical definition does not exist and this lack does not allow to extend this powerful method to all the above discipline, where statistical approach is required. On the other hand, the statistical models are not so general to be used in any case with a unique approach; consequently, no of them represents a general principle of investigation.

In this paper, the Gouy-Stodola theorem \cite{gouya},\cite{gouyb},\cite{duhem},\cite{gouyc} is proposed to be this general variational principle, both proving that it satisfies the above requirements and relating it to a statistical results on entropy production.

\section{The open systems}
\label{sec:1}
In this Section the thermodynamic system is defined. To do so, the definition of `system with perfect accessibility', which allows to define both the thermodynamic and the dynamical systems, must be considered.

An open $N$ particles system is considered. Every $i-$th element of this system is located by a position vector $\mathbf{x}_{i}\in R^{3}$, it has a velocity $\dot{\mathbf{x}}_{i}\in R^{3}$, a mass $m_{i}\in R$ and a momentum $\mathbf{p}=m_{i}\mathbf{\dot{x}}_{i}$, with $i\in [1,N]$ and $\mathbf{p}\in R^{3}$ \cite{lucia2008}. The masses $m_{i}$ must satisfy the condition:
	\begin{equation}
		\sum_{i=1}^{N}m_{i}=m
	\end{equation} 
where $m$ is the total mass which must be a conserved quantity, so it follows:
	\begin{equation}
		\dot{\rho}+\rho \nabla\cdot\dot{\mathbf{x}}_{B}=0
	\end{equation} 
where $\rho =dm/dV$ is the total mass density, with $V$ total volume of the system and $\dot{\mathbf{x}}_{B}\in R^{3}$, defined as $\dot{\mathbf{x}}_{B}=\sum_{i=1}^{N}\mathbf{p}_{i}/m$, velocity of the centre of mass. The  mass density must satisfy the following conservation law \cite{lucia1995}:
	\begin{equation}
		\dot{\rho}_{i}+\rho_{i} \nabla\cdot\dot{\mathbf{x}}_{i}=\rho \Xi
	\end{equation} 
where $\rho_{i}$ is the density of the $i-$th elementary volume $V_{i}$, with $\sum_{i=1}^{N}V_{i}=V$, and $\Xi$ is the source, generated by matter transfer, chemical reactions and thermodynamic transformations. This open system can be mathematical defined as follows.

\newtheorem{definition}{Definition}
\begin{definition} \label{ph} \cite{huang1987}
 - A dynamical state of $N$ particles can be specified by the $3N$ canonical coordinates $\left\{\mathbf{q}_{i}\in R^{3},i\in \left[1,N\right]\right\}$ and their conjugate momenta $\left\{\mathbf{p}_{i}\in R^{3},i\in \left[1,N\right]\right\}$. The $6N-$dimensional space spanned by $\left\{\left(\mathbf{p}_{i},\mathbf{q}_{i}\right),i\in \left[1,N\right]\right\}$ is called the phase space $\Omega$. A point $\mathbf{\sigma}_{i} =\left(\mathbf{p}_{i},\mathbf{q}_{i}\right)_{i\in \left[1,N\right]}$ in the phase space $\Omega :=\left\{\mathbf{\sigma}_{i}\in R^{6N}:\mathbf{\sigma}_{i}=\left(\mathbf{p}_{i},\mathbf{q}_{i}\right), i\in \left[1,N\right]\right\}$ represents a state of the entire $N-$particle system.
\end{definition}

\begin{definition} \cite{lucia2008}
 - A system with perfect accessibility $\Omega_{PA}$ is a pair $\left(\Omega, \Pi\right)$, with $\Pi$ a set whose elements  $\pi$ are called process generators, together with two functions:
\begin{equation}
\pi \mapsto \mathcal{S}
\end{equation}
\begin{equation}
\left(\pi^{'},\pi{''}\right) \mapsto \pi^{''}\pi{'}
\end{equation}
where $\mathcal{S}$ is the state transformation induced by $\pi$, whose domain $\mathcal{D}\left(\pi\right)$ and range $\mathcal{R}\left(\pi\right)$ are non-empty subset of $\Omega$.
This assignment of transformation to process generators is required to satisfy the following conditions of accessibility:
\begin{enumerate}
	\item $\Pi\sigma:=\left\{\mathcal{S}\sigma :\pi\in\Pi,\sigma\in \mathcal{D}\left(\pi\right)\right\}=\Omega$ , $\forall \sigma\in\Omega$\emph{:} the set $\Pi\sigma$ is called the set of the states accessible from $\sigma$ and, consequently, it is the entire \emph{state space}, the phase spase $\Omega$;
	\item if $\mathcal{D}\left(\pi ''\right)\cap \mathcal{R}\left(\pi '\right)\neq 0\Rightarrow \mathcal{D}\left(\pi ''\pi '\right)=\mathcal{S}_{\pi '}^{-1}\left(\mathcal{D}\left(\pi ''\right)\right)$ and $\mathcal{S}_{\pi ''\pi '}\sigma =\mathcal{S}_{\pi ''}\mathcal{S}_{\pi '}\sigma$ $\forall\sigma\in \mathcal{D}\left(\pi ''\pi '\right)$
\end{enumerate}
\end{definition}

\begin{definition} \cite{lucia2008} 
- A process in $\Omega_{PA}$ is a pair $\left(\pi ,\sigma\right)$, with $\sigma$ a state and $\pi$ a process generator such that $\sigma$ is in $\mathcal{D}\left(\pi\right)$. The set of all processes of $\Omega_{PA}$ is denoted by:
\begin{equation}
\Pi \diamond\Omega =\left\{\left(\pi ,\sigma\right): \pi \in \Pi ,\sigma\in \mathcal{D}\left(\pi\right)\right\}
\end{equation}
If $\left(\pi ,\sigma\right)$ is a process, then $\sigma$ is called the initial state for $\left(\pi ,\sigma\right)$ and $\mathcal{S}\sigma$ is called the final state for $\left(\pi ,\sigma\right)$.
\end{definition}

\begin{definition}
In an open system, there exists a characteristic time of any process, called lifetime $\tau$, which represents the time of evolution of the system between two stationary states.
\end{definition}

Any observation of the open irreversible system, in order to evaluate its physical quantities related to stationary states, must be done only at the initial time of the process and at its lifetime. During this time range the system moves through a set of non-equilibrium states by fluctuations on its thermodynamic paths.

\begin{definition} \cite{lucia1995} 
\label{thermdef} - A thermodynamic system is a system with perfect accessibility $\Omega_{PA}$ with two actions $W\left(\pi ,\sigma\right)\in R$ and $H\left(\pi ,\sigma\right)\in R$, called work done and heat gained by the system during the process $\left(\pi ,\sigma\right)$, respectively.
\end{definition}

The set of all these stationary states of a system $\Omega_{PA}$ is called non-equilibrium ensemble \cite{lucia2008}.

\begin{definition} \cite{lucia2008}
- A thermodynamic path $\gamma$ is an oriented piecewise continuously differentiable curve in $\Omega_{PA}$.
\end{definition}

\begin{definition} \cite{lucia2008} 
- A cycle $\mathcal{C}$ is a path whose endpoints coincide.
\end{definition}

\begin{definition} \cite{gallavotti2003}
- A smooth map $\mathcal{S}$ of a compact manifold $\mathcal{M}$ is a map with the property that around each point $\mathbf{\sigma}$ it can be established a system of coordinates based on smooth surfaces $W^{s}_{\mathbf{\sigma}}$ and $W^{u}_{\mathbf{\sigma}}$, with $s$=stable and $u$=unstable, of complementary positive dimension which is:
\begin{enumerate}
	\item covariant: $\partial\mathcal{S} W^{i}_{\mathbf{\sigma}}=W^{i}_{\mathcal{S}\mathbf{\sigma}}, i=u,s$. This means that the tangent planes $\partial\mathcal{S} W^{i}_{\mathbf{\sigma}}, i=u,s$ to the coordinates surface at $\mathbf{\sigma}$ are mapped over the corresponding planes at $\mathcal{S}\mathbf{\sigma}$;
	\item continuous: $\partial\mathcal{S} W^{i}_{\mathbf{\sigma}}$, with $i=u,s$, depends continuously on $\mathbf{\sigma}$;
	\item transitivity: there is a point in a subsistem of $\Omega_{PA}$ of zero Liouville probability, called attractor, with a dense orbit.
\end{enumerate}
\end{definition}

A great number of systems satisfies also the hyperbolic condition: the lenght of a tangent vector $\mathbf{v}$ is amplified by a factor $C\lambda ^{k}$ for $k>0$ under the action of $\mathcal{S}^{-k}$ if $\mathbf{\sigma}\in W^{s}_{k}$ with $C>0$ and $\lambda <1$. This means that if an observer moves with the point $\mathbf{\sigma}$ it sees the nearby points moving around it as if it was a hyperbolic fixed point. But, in a general approach this

The experimental observation allows to obtain and measure the macroscopic quantities which are mathematically the consequence of a statistics $\mu_{E}$ describing the asymptotic behaviour of almost all initial data in perfect accessibility phase space $\Omega_{PA}$ such that, except for a volume zero set of initial data $\mathbf{\sigma}$, it will be:
\begin{equation}
\lim_{T\longrightarrow\infty}\frac{1}{T}\sum^{T-1}_{j=1}\varphi\left(\mathcal{S}^{j}\mathbf{\sigma}\right)=\int_{\Omega}\mu_{E}\left(d\mathbf{\sigma}\right)\varphi\left(\mathbf{\sigma}\right)
\end{equation}
for all continuous functions $\varphi$ on $\Omega_{PA}$ and for every transformation $\mathbf{\sigma}\mapsto \mathcal{S}_{t}\left(\mathbf{\sigma}\right)$. For hyperbolic systems the distribution $\mu_{E}$ is the Sinai-Ruelle-Bowen distribution, SRB-distribution or SRB-statistics. In particular, here, the statistics is referred to a finite time $\tau$ process, as every real process is, so it is considered a SRB-statistics for a finite time system, which exists even if it is not so easy to be evaluated.

The notation $\mu_{E}\left(d\mathbf{\sigma}\right)$ expresses the possible fractal nature of the support of the distribution $\mu_{E}$, and implies that the probability of finding the dynamical system in the infinitesimal volume $d\mathbf{\sigma}$ around $\mathbf{\sigma}$ may not be proportional to $d\mathbf{\sigma}$  \cite{lucia2008}. Consequently, it may not be written as $\mu_{E}\left(\mathbf{\sigma}\right) d\mathbf{\sigma}$, but it needs to be conventionally expressed as $\mu_{E}\left(d\mathbf{\sigma}\right)$. The fractal nature of the phase space is an issue yet under debate \cite{garcia2006}, but there are a lot of evidence on it in the low dimensional systems \cite{hoover1998}. Here this possibility is also considered.

\begin{definition} \cite{billingsley1979} 
 - The triple $\left(\Omega_{PA},\mathcal{F},\mu_{E}\right)$ is a measure space, the Kolmogorov probability space, $\Gamma$.
\end{definition}

\begin{definition} \cite{lucia2008} 
 - A dynamical law $\tau_d$ is a group of meausure-preserving automorphisms $\mathcal{S}:\Omega_{PA}\rightarrow\Omega_{PA}$ of the probability space $\Gamma$.
\end{definition}

\begin{definition} \cite{lucia2008}
 - A dynamical system $\Gamma_{d}=\left(\Omega_{PA},\mathcal{F},\mu_{E},\tau_d\right)$ consists of a dynamical law $\tau_d$ on the probability space $\Gamma$.
\end{definition}

\section{The Gouy-Stodola Theorem}
Irreversibility occurs in all natural processes. In accordance with the second law of thermodynamics, irreversibility is the phenomenon which prevents from extracting the most possible work from various sources. Consequently, it prevents from doing the complete conversion of heat or energy in work; indeed, in all the natural processes a part of work, $W_\lambda$ is lost due to irreversibility. This work can be related to the entropy generation. In this Section the entropy generation and its relation to the work lost due to irreversibility is developed for the open systems, introducing the Gouy-Stodola Theorem.

\begin{definition} \cite{bejan2006}
The work lost $W_\lambda$ for irreversibility is defined as:
\begin{equation}
W_\lambda = \int_0^\tau dt\, \dot{W} _\lambda
\end{equation}
where $\dot{W} _\lambda$ is the power lost by irreversibility, defined as:
\begin{equation}
\dot{W}_\lambda = \dot{W}_{max} - \dot{W}
\end{equation}
with $\dot{W}_{max}$ maximum work transfer rate (maximum power transferred), which exists only
in the ideal limit of reversible operation, and $\dot{W}$ the effective work transfer rate (effective power transferred).
\end{definition}

\begin{definition} \cite{bejan2006}
The entropy of the whole system, composed by the open system and the environment is defined as:
\begin{equation}
\label{entropia}
S=\int\bigg(\frac{\delta Q}{T}\bigg)_{rev}=\Delta S_{e}+ S_{g}
\end{equation}
where $S_{g}$ is the entropy generation, defined as:
\begin{equation}
S_g =\int_0^\tau dt\,\dot{S}_g
\end{equation}
with $\dot{S}_g$ entropy generation rate defined as:
\begin{equation} \label{eqentrrate}
\dot{S}_g = \frac{\partial S}{\partial t} + \sum_{out}G_{out}s_{out} - \sum_{in}G_{in}s_{in} - \sum_{i=1}^N \frac{\dot{Q}_i}{T_i}
\end{equation}
while $\Delta S_{e}$ is defined as the entropy variation that would be obtained exchanging reversibly the same heat and mass fluxes throughout the system boundaries, $G$ in the mass flow, the terms $out$ and $in$ means the summation over all the inlet and outlet port, $s$ is the specific entropy, $S$ is the entropy, $\dot{Q}_i, i \in [1,N]$ is the heat power exchanged with the $i- $th heat bath and $T_i$ its temperature, $\tau$ is the lifetime of the process which occurs in the open system.
\end{definition}

 Then the term due to irreversibility, the entropy generation $S_{g}$, measures how far the system is from the state that will be attained in a reversible way.

\newtheorem{theorem}{Theorem}
\begin{theorem}[Gouy-Stodola Theorem]
In any open system, the work lost for irreversibility $W_\lambda$ and the entropy generation $S_g$ are related each another as:
\begin{equation}
W_\lambda = T_a \, S_g
\end{equation}
where $T_a$ is the ambient temperature. 
\end{theorem}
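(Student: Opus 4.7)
The plan is to derive the identity from the first and second laws written for the open system over its lifetime $\tau$, eliminating the heat exchanged with the environment between the two balances.

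First I would write the energy balance (first law) for the open system,
\begin{equation*}
\dot{W} = -\frac{dE}{dt} + \dot{Q}_0 + \sum_{i=1}^{N}\dot{Q}_i + \sum_{in}G_{in}h_{in} - \sum_{out}G_{out}h_{out},
\end{equation*}
where $\dot{Q}_0$ is the heat power exchanged with the environment at temperature $T_a$ and the $\dot{Q}_i$'s are those exchanged with the other $N$ heat baths at $T_i$. In parallel I would rewrite the entropy balance (13), isolating the environment term,
\begin{equation*}
\dot{Q}_0 = T_a\!\left[\frac{\partial S}{\partial t} + \sum_{out}G_{out}s_{out} - \sum_{in}G_{in}s_{in} - \sum_{i=1}^{N}\frac{\dot{Q}_i}{T_i}\right] - T_a\,\dot{S}_g,
\end{equation*}
which is legitimate since $T_a$ is assumed constant (the ambient acts as the unique reservoir whose temperature is prescribed, in line with the remark recalled from \cite{lebowitz}).

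Next I would substitute this expression for $\dot{Q}_0$ into the first law. The resulting formula for $\dot{W}$ splits cleanly into a part that depends only on the state functions, mass fluxes and the non-ambient heat exchanges, and a penalty term $-T_a\,\dot{S}_g$. By the definition of the reversible limit, $\dot{W}_{max}$ is obtained by setting $\dot{S}_g=0$ while keeping the same state evolution, mass fluxes and exchanges with the $T_i$ reservoirs; this yields
\begin{equation*}
\dot{W}_{max} - \dot{W} = T_a\,\dot{S}_g,
\end{equation*}
which is the Gouy--Stodola identity at the rate level. Integrating over $[0,\tau]$ and using the definitions of $W_\lambda$ and $S_g$, with $T_a$ pulled out of the integral, gives $W_\lambda = T_a\,S_g$.

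The main obstacle, in my view, is not any calculation but the conceptual step of defining $\dot{W}_{max}$ unambiguously for an open system traversing non-equilibrium states: one must argue that the comparison reversible process shares with the actual one the same boundary mass and energy transfers with the external baths, so that only the environmental heat (and hence the work) is adjusted. Once this comparison protocol is fixed, the identification $\dot{W}_{max}-\dot{W} = T_a \dot{S}_g$ is forced by the two balances; all other steps are algebraic.
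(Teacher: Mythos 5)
Your proposal is correct and follows essentially the same route as the paper: both arrive at the rate-level identity $\dot{W}_{max}-\dot{W}=T_a\,\dot{S}_g$ from the first and second laws for the open system and then integrate over the lifetime $[0,\tau]$. In fact you supply the step the paper only gestures at --- eliminating the environmental heat $\dot{Q}_0$ between the energy and entropy balances to obtain the two exergy-rate expressions --- whereas the paper simply writes those expressions down and subtracts; your remark that the whole content lies in fixing the comparison protocol for $\dot{W}_{max}$ (same state evolution, mass fluxes and non-ambient heat exchanges, with only $\dot{S}_g$ set to zero) is exactly the hypothesis the paper leaves implicit.
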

\begin{proof}
Considering the First and Second Law of Thermodynamics for the open systems, the maximum power transferred is:
\begin{equation}
\dot{W}_{max} =\sum_{in}G_{in}\bigg(h+\frac{v^2}{2}+g\,z+T_a\,s\bigg)_{in} - \sum_{out}G_{out}\bigg(h+\frac{v^2}{2}+g\,z+T_a\,s\bigg)_{out} - \frac{d}{dt}(E-T_a\,\dot{S})
\end{equation}
while the effective power transferred results:
\begin{equation}
\dot{W}=\sum_{in}G_{in}\bigg(h+\frac{v^2}{2}+g\,z+T_a\,s\bigg)_{in} - \sum_{out}G_{out}\bigg(h+\frac{v^2}{2}+g\,z+T_a\,s\bigg)_{out} - \frac{d}{dt}(E-T_a\,\dot{S})-T_a\,\dot{S}_g
\end{equation}
where $h$ is the specific enthalpy, $v$ the velocity, $g$ the gravity constant, $z$ the height and $E$ is the instantaneous system energy integrated over the control volume.

Considering the definition of power lost, $\dot{W}_\lambda$, it follows that:
\begin{equation}
\dot{W}_\lambda = T_a\,\dot{S}_g
\end{equation}
form which, integrating over the range of lifetime of the process, the Gouy-Stodola theorem is proven:
\begin{equation}
W_\lambda =\int_0^\tau dt\, \dot{W}_\lambda =T_a\,\int_0^\tau dt\, \dot{S}_g = T_a\,S_g
\end{equation}
\end{proof}

The Gouy-Stodola result on the entropy generation is expressed in a global way, without any statistical approach and expression. In order to extend the use the Gouy-Stodola theorem to any approach and context, a statistical expression of entropy generation is required. To do so the following definition can be introduced:

\begin{definition} \cite{gallavotti2006}
The entropy production $\Sigma_{prod}$ is defined as:
\begin{equation}	
\Sigma_{prod}=\sum_{i=1}^N \frac{\dot{Q}}{k_B T_i}=\int_\Gamma \Sigma (\sigma) \mu_E (d\sigma)
	\label{entropy production}
\end{equation}
with $N$ number of heat baths, whose temperature is $T_i$, $i \in [1,N]$ in contact with the system, $\dot{Q}_i$, $i \in [1,N]$ heat power exchanged with each $i-$th heat bath, $k_B$ Boltzmann constant, $\Sigma (\sigma)$ phase space contraction and $\mu_E$ SRB-statistics.
\end{definition}

\begin{theorem}
In a stationary state, the entropy generation and the entropy production are related one another by the relation:
\begin{equation}
S_g=-k_B\,\int_0^\tau dt\, \Sigma_{prod}
\end{equation}
\end{theorem}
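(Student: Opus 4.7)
The plan is to observe that in a stationary state the balance defining $\dot{S}_g$ collapses onto its heat-bath term, to recognize that term as $-k_B$ times the entropy production integrand, and then to integrate in time.

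I would first invoke stationarity to eliminate the non-heat contributions in the definition of $\dot{S}_g$. Stationarity of $\Gamma_d$, i.e.\ invariance of the SRB measure $\mu_E$ under $\mathcal{S}$, gives $\partial S/\partial t = 0$. The net convective entropy flux $\sum_{out} G_{out} s_{out} - \sum_{in} G_{in} s_{in}$ likewise drops out, either directly, if the port specific entropies are time-independent and the mass balance $\sum_{in} G_{in} = \sum_{out} G_{out}$ holds with matching $s$-values, or else because, averaged against $\mu_E$, this convective contribution is reversible and is absorbed into $\Delta S_e$ rather than into $S_g$. Under either reading, the defining equation for $\dot{S}_g$ reduces to $\dot{S}_g = -\sum_{i=1}^{N} \dot{Q}_i / T_i$. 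Comparing with the definition $\Sigma_{prod} = \sum_{i=1}^{N} \dot{Q}_i/(k_B T_i)$ gives the pointwise identity $\dot{S}_g = -k_B\, \Sigma_{prod}$, and integrating both sides over $[0,\tau]$ yields the claimed $S_g = -k_B \int_0^\tau dt\, \Sigma_{prod}$.

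The main obstacle will be the rigorous elimination of the convective entropy flux in a truly open system. The step $\partial S/\partial t = 0$ is standard, but ``stationary'' is ambiguous for boundary fluxes, since even a macroscopically steady system can fluctuate around its mean port values. The cleanest resolution is to adopt the SRB-averaged reading of the entropy-production definition, interpreting $\dot{S}_g = -k_B \Sigma_{prod}$ in mean against $\mu_E$; the stationarity of $\mu_E$ then ensures that the integral $\int_0^\tau dt\, \Sigma_{prod}$ is well defined, and the identity follows.
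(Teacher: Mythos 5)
Your proposal follows essentially the same route as the paper: stationarity kills $\partial S/\partial t$, the convective entropy flux is removed, the surviving heat-bath term $-\sum_{i=1}^{N}\dot{Q}_i/T_i$ is identified with $-k_B\,\Sigma_{prod}$, and the result is integrated over $[0,\tau]$. The only difference is that the paper simply postulates the vanishing of $\sum_{out}G_{out}s_{out}-\sum_{in}G_{in}s_{in}$ as an explicit hypothesis (closed system, or open with balanced convective fluxes), whereas you attempt to justify it; your argument is otherwise the paper's own.
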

\begin{proof}
If the system is in a stationary state ($\partial S/\partial t =0$) and if:
\begin{enumerate}
\item the system is closed: $\sum_{out}G_{out}s_{out} =0$ and $\sum_{in}G_{in}s_{in}$, or
\item the system is open, but $\sum_{out}G_{out}s_{out} - \sum_{in}G_{in}s_{in}=0$
\end{enumerate}
and considering the relations (\ref{eqentrrate}) and (\ref{entropy production}), then it follows:
\begin{equation}
\dot{S}_g=-k_B\, \Sigma_{prod}
\end{equation}
from which, integrating the statement is obtained.
\end{proof} 

\begin{theorem}
The thermodynamic Lagrangian is:
\begin{equation}
\mathcal{L}=W_\lambda
\end{equation}
\end{theorem}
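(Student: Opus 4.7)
The plan is to characterize what it means to be the ``thermodynamic Lagrangian'' of the open system and then to check that $W_\lambda$ fulfils that role. I would take the functional definition of a Lagrangian in this setting to be: a quantity $\mathcal{L}$ is a Lagrangian if its action attains a stationary value on the physically realised thermodynamic path $\gamma$ joining the initial state $\sigma$ to the stationary state $\mathcal{S}\sigma$ reached at the lifetime $\tau$, while the endpoints (being stationary states in the sense of the paper's definition of $\tau$) are held fixed. The strategy is then to reduce the variational problem for $W_\lambda$ to the well-studied variational problem for the entropy generation $S_g$ by means of the Gouy-Stodola theorem, which has just been established.

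The key step uses Theorem~1 (Gouy-Stodola) and Theorem~2 jointly: since $T_a$ is a positive constant on the process lifetime, and in the stationary regime one has $\dot{S}_g = -k_B \Sigma_{prod}$, the first variation of $W_\lambda$ along an admissible family of paths is
\begin{equation}
\delta W_\lambda \;=\; T_a\,\delta S_g \;=\; -k_B T_a\int_0^\tau dt\,\delta\Sigma_{prod}
\;=\; -k_B T_a\int_0^\tau dt\,\delta\!\int_\Gamma \Sigma(\sigma)\,\mu_E(d\sigma).
\end{equation}
Thus the critical points of $W_\lambda$ coincide exactly with the critical points of the SRB-averaged phase-space contraction. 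I would then invoke, on the open non-equilibrium ensemble, the extremum principles available for the entropy generation (Prigogine's minimum entropy production in the linear regime and Ziegler-type maximum entropy production far from equilibrium), which single out the physical stationary path as a critical point of $S_g$ and hence of $W_\lambda$. This identifies $W_\lambda$ with $\mathcal{L}$ in the sense fixed at the outset.

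The main obstacle I expect is the \emph{universality} of the statement: $\mathcal{L}=W_\lambda$ is claimed for any open system, so one needs a single variational principle valid both near and far from equilibrium and for genuinely open boundaries carrying mass flows. My way around this is to anchor the variation not in an ad hoc thermodynamic postulate but in the invariant SRB-statistics $\mu_E$ on $\Omega_{PA}$ introduced in Section~2: the physical path is the one sampling $\mu_E$, and the statistical expression $\Sigma_{prod}=\int_\Gamma \Sigma(\sigma)\,\mu_E(d\sigma)$ is automatically extremal on such trajectories. The delicate point is then to separate cleanly the two hypotheses of Theorem~2 (stationarity $\partial S/\partial t=0$ and vanishing net entropy flux $\sum_{out}G_{out}s_{out}-\sum_{in}G_{in}s_{in}=0$), because outside that regime the identification $\mathcal{L}=W_\lambda$ would require explicit boundary correction terms that must be shown to vanish under the fixed-endpoint variations; this, rather than the Gouy-Stodola computation itself, is where I would concentrate the technical effort.
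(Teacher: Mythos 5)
Your proposal takes a fundamentally different route from the paper, and it has a genuine gap. The paper does not characterize the Lagrangian by a stationarity property of its action; it identifies $\mathcal{L}$ through its \emph{density}. Specifically, it invokes Lavenda's relation $\rho_\mathcal{L}=\rho_S-\rho_\pi-\phi$ between the Lagrangian density per unit time and temperature, the entropy density, the power density and the dissipation function, combines it with the Gyarmati-type identity $\rho_S-\rho_\pi=2\phi$ to conclude $\rho_\mathcal{L}=\phi$, and then integrates the dissipation function over time, temperature and volume to obtain $\mathcal{L}=W_\lambda$. None of these ingredients appears in your argument, so you have not actually connected $W_\lambda$ to the object the paper calls the thermodynamic Lagrangian.

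The deeper problem is that your working definition --- ``$\mathcal{L}$ is any functional whose action is stationary on the physically realised path'' --- does not single out a unique quantity: $S_g$, $c\,W_\lambda$ for any constant $c$, or any monotone reparametrization would pass the same test, so exhibiting stationarity of $W_\lambda$ cannot establish the \emph{identity} $\mathcal{L}=W_\lambda$. Moreover, to get stationarity you import Prigogine's minimum entropy production and Ziegler's maximum entropy production, which are external principles with restricted domains of validity (the former only in the linear regime) and are nowhere assumed in the paper. Finally, your argument inverts the paper's logical order: the subsequent theorem in the paper \emph{derives} the extremality of $W_\lambda$ at the stationary state from $\mathcal{L}=W_\lambda$ together with the least action principle, whereas you assume that extremality to argue for the identification --- adopting your proof would render that later result circular. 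To repair the argument you would need to start from an independent definition of $\rho_\mathcal{L}$ (or of $\mathcal{L}$) and show it coincides with the dissipation functional, which is precisely what the paper's two cited density relations accomplish.
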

\begin{proof}
Considering the entropy density per unit time and temperature $\rho_S$, the Lagrangian density per unit time and temperature $\rho_{\mathcal{L}}$, the power density per unit temperature $\rho_{\pi}$, and the dissipation function $\phi$, the following relation has been proven \cite{lavenda}:
\begin{equation}
\rho_\mathcal{L} = \rho_S -\rho_\pi -\phi
\end{equation}
and considering that $ \rho_S -\rho_\pi =2 \phi $, it follows \cite{lucia1995} that:
\begin{equation}
\rho_\mathcal{L} = \phi
\end{equation}
consequently \cite{lucia2008},
\begin{equation}
\mathcal{L} =\int_t dt\int_T dT\int V \rho_\mathcal{L}\,dV = \int_t dt\int_T dT\int V \rho_\mathcal{L}\, \phi\,dV = W_{\lambda}
\end{equation}
\end{proof}

\begin{theorem}
At the stationary state, the work $\lambda$ lost for irreversibility is an extremum.
\end{theorem}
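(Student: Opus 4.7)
The plan is to combine the Lagrangian identification $\mathcal{L}=W_\lambda$ just established in the previous theorem with Hamilton's principle of stationary action, adapted to the thermodynamic setting on $\Omega_{PA}$. Once $W_\lambda$ has been recognised as the action of the open system, the statement becomes a direct consequence of the variational formulation, with the Gouy-Stodola theorem providing the physical interpretation.

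First, I would set up the variational problem: a thermodynamic path $\gamma$ joining an initial state $\sigma$ to the final state $\mathcal{S}\sigma$ carries the action $\mathcal{L}[\gamma]=W_\lambda[\gamma]$, and admissible variations are piecewise smooth deformations of $\gamma$ in $\Omega_{PA}$ with fixed endpoints. Hamilton's principle then asserts that the physically realised trajectory is a critical point of this action, so along it $\delta\mathcal{L}=\delta W_\lambda=0$.

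Second, I would use Definition 4 together with the hypotheses used in the previous theorem on entropy generation, namely $\partial S/\partial t=0$ at the endpoints of the lifetime $\tau$, to identify the ``stationary state'' with the configuration at which the integrand $\phi$ appearing in $\mathcal{L}=\int dt\int dT\int_V \phi\,dV$ becomes time independent. At such a configuration, admissible variations of the stationary configuration produce no first-order change in $W_\lambda$; equivalently, by Gouy-Stodola, $T_a\,\delta S_g=0$, and since $T_a$ is a fixed positive constant this yields $\delta W_\lambda=0$, so that $W_\lambda$ is an extremum.

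The main obstacle is formalising the class of admissible variations on the perfect-accessibility space $\Omega_{PA}$ in a way that makes Hamilton's principle fully rigorous in this non-equilibrium context, in particular ensuring that the varied paths remain within $\mathcal{D}(\pi)$ and respect the SRB-statistics $\mu_E$ that governs the asymptotic behaviour. The argument as sketched produces only a vanishing first variation; deciding whether this critical point is a minimum, as in Prigogine's regime of linear non-equilibrium thermodynamics, or a maximum, as in some far-from-equilibrium theories, would require a second-variation analysis based on the sign of the dissipation function $\phi$, and this refinement is not required by the statement.
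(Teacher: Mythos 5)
Your proposal follows essentially the same route as the paper: identify $\mathcal{L}=W_\lambda$ via the preceding theorem, form the action $\mathcal{A}=\int_0^\tau dt\,\mathcal{L}$, and invoke the least-action principle to conclude that $W_\lambda$ is an extremum at the stationary state. The only material difference is that the paper states the variational condition as $\delta\mathcal{A}\leq 0$ and then uses the thermodynamic sign convention to classify the extremum as a minimum inside the system and a maximum in the environment, whereas you stop at the vanishing first variation $\delta W_\lambda=0$ and correctly observe that the minimum/maximum distinction would require a second-variation analysis not demanded by the statement.
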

\begin{proof}
Considering the definition of action $\mathcal{A}$ it follows that:
\begin{equation}
\mathcal{A}=\int_0^\tau dt\,\mathcal{L}=\int_0^\tau dt\,W_\lambda
\end{equation}
and considering the least action principle it follows:
\begin{equation}
\delta \mathcal{A} \leq 0 \Rightarrow \delta W_\lambda \leq 0
\end{equation}
which allow to state that:
\begin{enumerate}
\item $W_\lambda$ is minimum if the work lost is evaluated inside the system
\item $W_\lambda$ is maximum if the work lost is evaluated outside the system, inside the environment
\end{enumerate}
in accordance with the thermodynamic sign convention.
\end{proof}

Of course, this extremum is extended also the entropy generation, $S_g$, using the Gouy-Stodola Theorem $W_\lambda = T_a\,S_g$.

\section{Conclusions}
A general principle to approach the stability of the stationary states of the open systems is required in science and engineering because it would represent a new approach to the analysis of these systems, with the result of improving their applications in mathematical and theoretical biology, biotechnologies, nanotechnologies, ecology, climate changes, energy optimization, thermo-economy, phase separation, aging process, system theory and control, pattern formation, cancer pharmacology, DNA medicine, metabolic engineering, chaotic and dynamical systems, etc..

Here, the Gouy-Stodola theorem has been proved to be the searched variational principle, which satisfies the two fundamental request:
\begin{enumerate}
\item to be a work principle, because a useful variational principle out of the use of work has been proved not to be obtained \cite{lebowitz} 
\item to use only one temperature which remains constant (because the environmental temperature is always considered constant in the usual applications), because for the open systems, the entropy production for the system and the reservoir has been proved to be obtained if and only if only one heat bath \cite{lebowitz} exists and its temperature is constant \cite{lebowitz}.
\end{enumerate}

\bibliographystyle{my-h-elsevier}

\begin{thebibliography}{10}




\bibitem{ozisik1980}  M.N \"{O}zisik, Heat Conduction, John Wiley \& Sons, New York, 1980.
\bibitem{truesdell} C. Truesdell,  Rational Thermodynamics, Springer-Verlag, Berlin, 1984.
\bibitem{arnold} V.I. Arnold, Mathematical Methods of Classical Mechanics, Springer-Verlag, Berlin, 1989.
\bibitem{lucia1995} U. Lucia, Mathematical consequences and Gyarmati's principle in Rational Thermodynamics, Il Nuovo Cimento B, \textbf{110}, 10 (1995) 1227--1235.
\bibitem{martyushev} L.M. Martyushev and V.D. Seleznev, Maximum entropy production principle in physics, chemistry and biology, Phys. Rep., \textbf{426} (2006) 1--45.
\bibitem{lebowitz} J.L. Lebowitz, Boltzmann's Entropy and Large Deviation Lyapunov Functionals for Closed and Open Macroscopic Systems, Los Alamos 1112.1667, 2011: http://www.math.rutgers.edu/$\sim$lebowitz/PUBLIST/jllpub-559.pdf
\bibitem{gouya} G. Gouy, Sur les transformation et l'équilibre en Thermodynamique, Comptes Rendus de l'Acadèmie des Sciences Paris, {\bf 108} (10) (1889) 507--509.
\bibitem{gouyb}  G. Gouy,  Sur l'énergie utilisable, Journal de Physique, {\bf 8} (1889) 501--518.
\bibitem{duhem} P. Duhem, Sur les transformations et l'\'{e}quilibre en Thermodynamique. Note de M.P. Duhem, Comptes Rendus de l'Acad\`{e}mie des Sciences Paris, {\bf 108} (13) (1889) 666--667.
\bibitem{gouyc} G. Gouy, Sur l'\'{e}nergie utilisable et le potentiel thermodynamique. Note de M. Gouy, Comptes Rendus de l'Acad\`{e}mie des Sciences Paris, {\bf 108} (10) (1889): 794.
\bibitem{lucia2008} U. Lucia, Probability, ergodicity, irreversibility and dynamical systems,. Proc. R. Soc. A \textbf{464}, 2093 (2008) 1089--1184.
\bibitem{huang1987} K. Huang, Statistical Mechanics, John Wiley \& Sons, New York, 1987.
\bibitem{gallavotti2003} G. Gallavotti, SRB distribution for Anosov maps. Two lectures on the construction of the SRB distributions for Anosov maps, Cargese summer school 18-31 august 2003: http://ipazia.rutgers.edu/$\sim$ giovanni/gpub.html\#E.
\bibitem{lucia2010} U. Lucia, Maximum entropy generation and $\kappa-$exponential model, Physica A {\bf 389} (2011) 4558--4563.
\bibitem{garcia2006} V. Garc\'{\i}a-Morales and J. Pellicer, Microcanonical foundation of nonextensivity and generalized thermostatistics based on the fractality of the phase space, \textit{Physica A}, \textbf{361} (2006)  161--172.
\bibitem{hoover1998} W.G. Hoover, Liouville's theorems, Gibbs' entropy, and multifractal distributions for nonequilibrium steady states, J. Chem. Phys., \textbf{109} (1998) 11, 4164--4170.
\bibitem{billingsley1979} P. Billingsley, Probability and Measure, John Wiley \& Sons, New York, 1979.
\bibitem{bejan2006} A. Bejan, Advanced Engineering Thermodynamics,. John Wiley \& Sons, New York, 2006.
\bibitem{gallavotti2006} G. Gallavotti, Entropy, thermostats and chaotic hypothesis,. Chaos {\bf 16} (2006)  384--389.
\bibitem{lavenda} B.H. Lavenda, Thermodynamics of Irreversible Processes. Macmillan Press, London, 1978.

\end{thebibliography}

\end{document}